\documentclass[11pt]{article}

\usepackage[margin=1in]{geometry}
\usepackage{amsmath,amsthm,amssymb}
\usepackage{microtype}
\usepackage{natbib}
\usepackage{xcolor}
\usepackage[hidelinks]{hyperref}

\newtheorem{proposition}{Proposition}

\newcommand{\1}{\mathbf{1}}
\newcommand{\Pp}{\mathbb{P}}

\title{Using Pre-Trends for Inference in Difference-in-Differences}

\author{Cl\'ement de Chaisemartin%
\thanks{Sciences Po, Department of Economics.
E-mail: \href{mailto:clement.dechaisemartin@sciencespo.fr}
{clement.dechaisemartin@sciencespo.fr}.
Funded by the European Union (ERC, REALLYCREDIBLE, Grant Agreement
No.\ 101043899). Views and opinions expressed are those of the author only
and do not necessarily reflect those of the European Union or the European
Research Council Executive Agency. Neither the European Union nor the
granting authority can be held responsible for them.}}

\date{}

\begin{document}

\maketitle

\section{Introduction}

Difference-in-differences (DID) are sometimes estimated with many pre-treatment
periods. In such settings, the observed pre-treatment
outcome evolutions provide direct information about the magnitude of shocks that
could also occur after treatment. This paper proposes a simple inference
procedure that uses those pre-trends as the reference distribution for the
post-treatment DID.

The idea is to test a candidate treatment effect by subtracting it from the
post-treatment outcome of the treated group and comparing the resulting
absolute DID with the empirical distribution of the absolute pre-treatment
DIDs. Under the null, the adjusted post-treatment DID is an untreated
differential trend and should therefore be distributed like its pre-treatment
counterparts. If the empirical distribution of those differential trends is
uniformly consistent for a continuous limiting distribution, the resulting
rank-based test is asymptotically exact. Confidence sets follow by test
inversion.

The procedure is closely related to conformal inference, but its DID-specific
predictor leads to a distinct identifying restriction. Existing conformal DID
procedures are based on outcome levels and imply parallel trends. Here,
the prediction error is instead the change in the treated group's untreated
outcome minus the contemporaneous change in the control group's untreated
outcome. The method therefore does not require parallel trends. It requires only
that the distribution of the absolute untreated differential trend be stable
over time. This permits the mean differential trend to be nonzero, to vary over
time, and even to change sign, while imposing stability on its magnitude. The
procedure is nonparametric, easy to implement, and especially natural in
applications with a long pre-treatment time series.

\section{Preliminary result}

Let $U_1,\ldots,U_T$ satisfy
\begin{align}\label{eq:setup}
U_t=U_t(0),\qquad t=1,\ldots,T-1,\nonumber\\
U_T=U_T(0)+\tau,
\end{align}
where $\tau\in\mathbb{R}$ is an unknown, non-stochastic treatment-effect parameter we seek to
estimate. Fix $\alpha\in(0,1)$, and assume that, for every $t$ and every $u\geq0$,
\begin{align}\label{eq:stationarity}
\Pp\!\left(|U_t(0)|\leq u\right)=F(u),
\end{align}
where $F$ is a continuous cumulative distribution function.

\medskip
For any candidate value $x\in\mathbb{R}$ for $\tau$, define the empirical
distribution function
\[
\widehat F(u;x)
=
\frac{1}{T}
\sum_{t=1}^T
\1\!\left\{
\left|U_t-\1\{t=T\}x\right|\le u
\right\}.
\]
Assume that, under the null hypothesis
\[
H_0:\tau=x,
\]
\begin{align}\label{eq:unif_consistency}
\sup_{u\ge0}
\left|
\widehat F(u;x)-F(u)
\right|
\xrightarrow{P}
0.
\end{align}
Consider the test that rejects $H_0$ whenever
\[
\widehat F(|U_T-x|;x)\ge 1-\alpha.
\]

\begin{proposition}\label{prop:conform}
Suppose \eqref{eq:setup}-\eqref{eq:unif_consistency} hold. Under $H_0:\tau=x$,
\[
\Pp\!\left(
\widehat F(|U_T-x|;x)\ge1-\alpha
\right)
\longrightarrow
\alpha.
\]
Thus, the test is asymptotically exact.
\end{proposition}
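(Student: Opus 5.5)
Under $H_0:\tau=x$ we have $U_T-x=U_T(0)$, so $|U_T-x|=|U_T(0)|$. By \eqref{eq:stationarity}, $F(|U_T(0)|)$ is the probability integral transform of a random variable with continuous cdf $F$, so $F(|U_T-x|)\sim\mathrm{Uniform}(0,1)$ exactly, for every $T$. The plan is to show that $\widehat F(|U_T-x|;x)$ is asymptotically equivalent to this uniform variable and then pass to the limit in the rejection probability.

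The first step is the decomposition
\[
\widehat F(|U_T-x|;x)=F(|U_T-x|)+R_T,\qquad |R_T|\le \sup_{u\ge0}\bigl|\widehat F(u;x)-F(u)\bigr|.
\]
The bound holds because $|U_T-x|\ge0$ is a (random) point of the set over which the supremum is taken. By \eqref{eq:unif_consistency}, $R_T\xrightarrow{P}0$. Uniformity in $u$ is what makes this work: the evaluation point $|U_T-x|$ is random and depends on the same data as $\widehat F$, so pointwise consistency would not be enough.

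The second step is a sandwich argument. Write $V_T=F(|U_T-x|)\sim\mathrm{Uniform}(0,1)$. Fix $\varepsilon>0$. On the event $\{|R_T|\le\varepsilon\}$,
\[
\{V_T\ge 1-\alpha+\varepsilon\}\subseteq\{\widehat F(|U_T-x|;x)\ge1-\alpha\}\subseteq\{V_T\ge 1-\alpha-\varepsilon\}.
\]
It follows that
\[
\alpha-\varepsilon-\Pp(|R_T|>\varepsilon)\le \Pp\bigl(\widehat F(|U_T-x|;x)\ge1-\alpha\bigr)\le \alpha+\varepsilon+\Pp(|R_T|>\varepsilon),
\]
using $\Pp(V_T\ge c)=1-c$ for $c\in[0,1]$, with trivial truncation at the endpoints if $\varepsilon$ is large. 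Letting $T\to\infty$ and then $\varepsilon\downarrow0$ gives the limit $\alpha$. Equivalently, $\widehat F(|U_T-x|;x)\Rightarrow\mathrm{Uniform}(0,1)$ by Slutsky, and $1-\alpha$ is a continuity point of the limit law.

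The main obstacle is the exact uniformity of $V_T$. This is precisely where continuity of $F$ in \eqref{eq:stationarity} is used: if $F$ had atoms, $F(|U_T(0)|)$ would not be uniform and the limit could differ from $\alpha$. The continuity argument also needs the rejection region to have boundary of probability zero in the limit, which holds because the uniform law has no atoms. No independence across $t$ is required beyond what is already packaged into \eqref{eq:unif_consistency}.
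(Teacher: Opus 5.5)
Your proof is correct and follows the same route as the paper: under $H_0$ you reduce $|U_T-x|$ to $|U_T(0)|$, bound $\bigl|\widehat F(|U_T-x|;x)-F(|U_T-x|)\bigr|$ by the supremum in \eqref{eq:unif_consistency}, and use the probability integral transform, which relies on continuity of $F$. Your $\varepsilon$-sandwich just spells out the Slutsky/continuity-point step that the paper states in one line.
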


\begin{proof}
Under $H_0$,
\[
|U_T-x|=|U_T(0)|.
\]
Let
\[
V=|U_T(0)|.
\]
Then $V$ has cumulative distribution function $F$. Uniform consistency
implies
\[
\left|
\widehat F(|U_T-x|;x)-F(|U_T-x|)
\right|
\le
\sup_{u\ge0}
\left|
\widehat F(u;x)-F(u)
\right|
=o_P(1).
\]
Hence,
\[
\widehat F(|U_T-x|;x)
=
F(V)+o_P(1).
\]

Since $F$ is continuous, the probability integral transform implies
\[
F(V)\sim\operatorname{Unif}(0,1).
\]
Therefore,
\[
\widehat F(|U_T-x|;x)
\xrightarrow{d}
\operatorname{Unif}(0,1),
\]
and
\[
\Pp\!\left(
\widehat F(|U_T-x|;x)\ge1-\alpha
\right)
\longrightarrow
\alpha.
\]
\end{proof}
The test in Proposition \ref{prop:conform} follows the same basic principle as conformal inference
\citep{vovkgammermanshafer2005,chernozhukov2018}: under the null, the treated observation, after adjustment
by the hypothesized treatment effect, should be indistinguishable from the
untreated observations. Accordingly, inference is based on the empirical rank
of the treated observation among the adjusted sample. Unlike existing
conformal methods, however, our procedure requires only uniform consistency of the empirical distribution function. A confidence set for $\tau$ can be obtained by inverting the test. In
particular, an asymptotic $1-\alpha$ confidence set is

\[
\mathcal C_{1-\alpha}
=
\left\{
x\in\mathbb{R}:
\widehat F(|U_T-x|;x)<1-\alpha
\right\}.
\]
By Proposition~\ref{prop:conform},
$\Pp(\tau\in\mathcal C_{1-\alpha})\to1-\alpha$.

\section{Application to difference-in-differences}

\subsection{Set-up}

There are two groups, indexed by $g\in\{1,2\}$, and dates
$t\in\{0,\ldots,T\}$. Group~1 is never treated. Group~2 is untreated at dates
$0,\ldots,T-1$ and treated at date $T$. Let $Y_{g,t}$ denote group $g$'s
observed outcome at date $t$, and let $Y_{g,t}(0)$ denote its untreated
potential outcome. The treatment effect of group~2 at date $T$ is
\[
\tau=Y_{2,T}(1)-Y_{2,T}(0),
\]
which is assumed to be non-stochastic.

\medskip
For every $t\in\{1,\ldots,T\}$, define the untreated differential trend
\[
U_t(0)
=
Y_{2,t}(0)-Y_{2,t-1}(0)
-
\bigl(Y_{1,t}(0)-Y_{1,t-1}(0)\bigr).
\]
The observed differential trend is
\[
U_t
=
Y_{2,t}-Y_{2,t-1}
-
\bigl(Y_{1,t}-Y_{1,t-1}\bigr).
\]

\subsection{Sufficient conditions for \eqref{eq:setup}-\eqref{eq:unif_consistency} to hold}

Because neither group is treated before $T$,
\[
U_t=U_t(0),\qquad t<T,
\]
whereas
\[
U_T=U_T(0)+\tau.
\]
Thus, \eqref{eq:setup} holds.

\medskip

For \eqref{eq:stationarity} to hold, it is sufficient that $|U_t(0)|$ have
the same continuous distribution at every date $t$. Continuity may be
plausible when groups are aggregate entities (e.g., states or regions) and
$Y_{g,t}(0)$ averages outcomes across many units. Importantly,
\eqref{eq:stationarity} does not imply the usual parallel-trends condition.
Provided $U_t(0)$ is integrable, parallel trends requires
$\mathbb{E}[U_t(0)]=0$ at every date $t$. In contrast,
\eqref{eq:stationarity} restricts only the distribution of $|U_t(0)|$ and
places no restriction on the time evolution of the sign of $U_t(0)$.
Consequently, $\mathbb{E}[U_t(0)]$ may be nonzero, may vary over time, and may
change sign. The restriction is nevertheless substantive: it requires the
entire distribution of the magnitude $|U_t(0)|$, including its tail behavior
and all its quantiles, to remain stable over time.

\medskip

Finally, a sufficient condition for \eqref{eq:unif_consistency} is that the
joint increment process
\[
\left(\Delta Y_{1,t}(0),\Delta Y_{2,t}(0)\right),
\qquad
\Delta Y_{g,t}(0)=Y_{g,t}(0)-Y_{g,t-1}(0),
\]
be strictly stationary and ergodic. Then
\[
U_t(0)=\Delta Y_{2,t}(0)-\Delta Y_{1,t}(0)
\]
is also stationary and ergodic. The ergodic theorem, applied to
$\1\{|U_t(0)|\leq u\}$, gives pointwise almost-sure convergence of the
empirical distribution function to $F$ at every $u$. Because $F$ is
continuous, pointwise convergence of distribution functions implies uniform
convergence, so
\[
\sup_{u\geq0}
\left|
\frac{1}{T}\sum_{t=1}^T\1\{|U_t(0)|\leq u\}
-F(u)
\right|
\xrightarrow{P}0.
\]
Under $H_0$, the empirical distribution function in the preceding display is
exactly $\widehat F(u;x)$, establishing \eqref{eq:unif_consistency}.

A more primitive sufficient condition is that the joint increment process be
strictly stationary and strongly mixing, with mixing coefficients satisfying
$\alpha(k)\to0$. For example, this follows if the two increment processes
are independent of each other as stochastic processes and each is strictly
stationary and strongly mixing with mixing coefficients converging to zero.

\subsection{The test in the difference-in-differences example}

When testing the null of no treatment effect,
the null is rejected whenever
the $T-1$-to-$T$
difference-in-differences is larger in absolute value than the
$(1-\alpha)$-quantile of the absolute pre-treatment difference-in-differences. Intuitively, the test treats the
post-treatment differential trend as an end-of-sample observation and asks
whether it is unusually large relative to the historical distribution of
differential trends. More generally, to test $H_0:\tau=x$, the outcome of group~2 at $T$ is
adjusted by $x$, and one rejects whenever
\[
\widehat F\!\left(
\left|
Y_{2,T}-Y_{2,T-1}
-
(Y_{1,T}-Y_{1,T-1})
-x
\right|;x
\right)
\geq1-\alpha.
\]

\section{Relationship to prior literature}

\subsection{Comparison with \citet{chernozhukov2021}}

The method proposed in this paper can be viewed as a simple and novel
difference-in-differences-specific choice of counterfactual predictor within
the general conformal inference framework of
\citet{chernozhukov2021}. Their framework starts from a counterfactual
prediction rule for the untreated potential outcome of the treated unit and
conducts inference by comparing the post-treatment prediction error with the
empirical distribution of analogous pre-treatment prediction errors. Our
procedure follows this logic using a particularly simple predictor tailored to
the DID setting.

\medskip
Consider testing the sharp null that the treatment effect at date $T$ equals
$x$. Under this null,
\[
Y_{2,T}(0)=Y_{2,T}-x.
\]

For each date $t$, we predict group $2$'s untreated potential outcome by
\[
\widehat Y_{2,t}(0)
=
Y_{2,t-1}
+
\left(Y_{1,t}-Y_{1,t-1}\right),
\]
namely, its previous outcome plus the contemporaneous outcome change of the
control group. Under $H_0$, the observed outcome adjusted by the hypothesized
effect is $Y_{2,t}-\1\{t=T\}x$. The associated conformity score is therefore
\begin{align*}
\left|
Y_{2,t}-\1\{t=T\}x
-
\widehat Y_{2,t}(0)
\right|
&=
\left|
Y_{2,t}-Y_{2,t-1}
-
\left(Y_{1,t}-Y_{1,t-1}\right)
-\1\{t=T\}x
\right|\\
&=
\left|U_t-\1\{t=T\}x\right|,
\end{align*}
which is precisely the statistic used in our test. Because the prediction rule
contains no unknown parameter, the estimator-stability condition in
\citet{chernozhukov2021} is automatically satisfied. The relevant substantive
requirement is therefore the distributional stability of the untreated
prediction errors, together with the weak-dependence conditions needed for the
empirical distribution of the conformity scores to be consistently estimated.

\medskip
Relative to the DID predictor considered by
\citet{chernozhukov2021}, our predictor has a key advantage: it does not imply
parallel trends. Their DID specification predicts the untreated potential
outcome of the treated group using
\[
\widehat Y_{2,t}(0)
=
\widehat\mu+Y_{1,t},
\]
where $\widehat\mu$ estimates a time-invariant difference between the two
groups. Thus, asymptotically, the prediction error is simply the untreated
outcome gap after removing a constant. 
If this prediction error has a stable distribution over time and is
integrable, then
\[
\mathbb{E}\!\left[Y_{2,t}(0)-Y_{1,t}(0)\right]
\]
must be constant over time. First-differencing this equality yields
\[
\mathbb{E}\!\left[
Y_{2,t}(0)-Y_{2,t-1}(0)
-
\left(
Y_{1,t}(0)-Y_{1,t-1}(0)
\right)
\right]
=0,
\]
which is precisely the parallel-trends assumption.

\medskip
Our predictor instead yields the prediction error
\[
Y_{2,t}(0)-\widehat Y_{2,t}(0)
=
Y_{2,t}(0)-Y_{2,t-1}(0)
-
\left(
Y_{1,t}(0)-Y_{1,t-1}(0)
\right),
\]
namely the untreated differential trend between the two groups. Stability of
the distribution of this prediction error only requires, in particular, that
\[
\mathbb{E}\!\left[
Y_{2,t}(0)-Y_{2,t-1}(0)
-
\left(
Y_{1,t}(0)-Y_{1,t-1}(0)
\right)
\right]
\]
be constant over time. This constant may be nonzero, so group $2$ may
systematically exhibit faster or slower untreated growth than group $1$.
Thus, our procedure allows stable violations of parallel trends. As discussed earlier, even stability of the full distribution of this prediction
error is stronger than necessary for the test considered in this paper.
Because the test uses only the absolute conformity scores, it is sufficient
that
\[
\left|
Y_{2,t}(0)-\widehat Y_{2,t}(0)
\right|
=
\left|
Y_{2,t}(0)-Y_{2,t-1}(0)
-
\left(
Y_{1,t}(0)-Y_{1,t-1}(0)
\right)
\right|
\]
have a stable distribution over time.

\medskip
Overall, the distinction is between two counterfactual predictors within the same framework.
Chernozhukov, W\"uthrich, and Zhu's DID predictor is based on outcome levels
and implies parallel trends, whereas ours is based on outcome changes and
requires only distributional stability of the absolute prediction errors, or,
equivalently, of the absolute untreated differential trends.

\subsection{Related difference-in-differences literature}

The proposal in this paper is related to, but distinct from, several strands of the
difference-in-differences literature.

\medskip
\citet{conleytaber2011} and \citet{fermanpinto2019} study inference with few
treated groups. Their procedures use the cross-sectional distribution of
estimated shocks among many control groups to approximate the distribution of
the treated groups' shocks. The present approach instead uses the time-series
distribution of pre-treatment differential trends for the same treated and
control groups.

\medskip
A large literature uses pre-treatment trends to assess or relax parallel
trends. \citet{roth2022} studies inference conditional on passing a pre-trends
test. \citet{bilinskihatfield2018} and \citet{detteschumann2024} advocate
non-inferiority or equivalence tests, which ask whether pre-treatment
differences in trends are small enough to be substantively negligible.
\citet{rambachanroth2023} use restrictions on how violations of parallel
trends may evolve from pre- to post-treatment periods to construct robust
confidence sets. These papers use pre-trends primarily to diagnose, bound, or
extrapolate violations of parallel trends. In contrast, the proposed test
uses the empirical distribution of the differential pre-trends themselves as
the reference distribution for the final differential trend. While it allows for differential trends, it does so in a more restricted way than \citet{rambachanroth2023}.

\medskip
The procedure is also connected to end-of-sample stability and structural-break tests, particularly \citet{andrews2003}. From this perspective, the
null of no treatment effect says that the final differential trend is not a
distributional break relative to the pre-treatment differential-trend
process. The causal interpretation comes from the timing of treatment and the
additive decomposition $U_T=U_T(0)+\tau$.

\medskip
I am not aware of an earlier difference-in-differences paper that proposes
exactly the test considered here: ranking the absolute post-treatment
difference-in-differences within the time-series distribution of absolute
pre-treatment difference-in-differences while explicitly allowing a stable,
nonzero differential trend. The closest antecedent appears to be the general
counterfactual conformal framework of \citet{chernozhukov2021}, together with
the end-of-sample stability literature. This novelty claim should nonetheless
be treated as provisional rather than exhaustive.

\subsection{Related conformal-inference literature}

The procedure belongs to the broad family of conformal methods that calibrate
a new prediction error using historical errors. Classical conformal
prediction obtains finite-sample validity from exchangeability; see
\citet{vovkgammermanshafer2005} and \citet{leietal2018}. Time-series settings
require weaker approximate-exchangeability arguments or explicit adjustments
for dependence. \citet{chernozhukov2018} develop conformal methods for
dependent data using permutation schemes, while \citet{xu2021} construct
prediction intervals for dynamic time series using ensembles of leave-one-out
residuals.

\medskip
For causal inference, \citet{lei2021} construct conformal intervals for
counterfactual outcomes and individual treatment effects under potential-
outcome assumptions. Their goal differs from the present aggregate time-series
test, but the common principle is to turn uncertainty about an unobserved
counterfactual into a prediction problem and calibrate prediction errors
without imposing a fully parametric distribution.

\medskip
The closest reference remains \citet{chernozhukov2021}: their method explicitly
covers counterfactual and synthetic-control estimators, including
difference-in-differences, and permits weakly dependent time-series errors.
The contribution of the present specialization is to use differential trends
as the primitive conformity scores. Doing so makes it transparent that the
required stability restriction need not be parallel trends.

\end{document}